\newtheorem{theorem}{Theorem}
	\renewcommand{\qedsymbol}{\qedsymbol}}
\DeclareMathAlphabet\mathbfcal{OMS}{cmsy}{b}{n}
\definecolor{Gray}{gray}{0.9}
\definecolor{LightCyan}{rgb}{0.88,1,1}
\algnewcommand{\IIf}[1]{\State\algorithmicif\ #1\ \algorithmicthen}
\algnewcommand{\IELSEIF}[2]{\State\algorithmicelseif \ #2\ \algorithmicthen}
\algnewcommand{\IElse}[3]{\State\algorithmicelse\ #3\ }
\algnewcommand{\EndIIf}{\unskip\ \algorithmicend\ \algorithmicif}
\tikzstyle{Block} = [rectangle, rounded corners, minimum width=3cm, minimum height=1cm,text centered, draw=black, fill=red!30]
\definecolor{Burgundy}{RGB}{10,0,0}
\algnewcommand\INPUT{\item[\textbf{Input:}]}%
\algnewcommand\OUTPUT{\item[\textbf{Output:}]}%
\begin{document}	

\title{Distortion Minimization with Age of Information and Cost Constraints
}
\author{\IEEEauthorblockN{Jayanth S\IEEEauthorrefmark{1}, Nikolaos Pappas\IEEEauthorrefmark{2} and Rajshekhar V Bhat\IEEEauthorrefmark{3}}
	% \IEEEauthorblockA{Indian Institute of Technology Dharwad,  India\\
    \IEEEauthorblockA{\IEEEauthorrefmark{1}\IEEEauthorrefmark{3}Department of Electrical Engineering, Indian Institute of Technology Dharwad, Dharwad, India\\
	\IEEEauthorrefmark{2}Department of Computer and Information Science, Link{\"o}ping University, Link{\"o}ping, Sweden\\
		Emails: 
		\IEEEauthorrefmark{1}jayanth.s.20@iitdh.ac.in, 
		\IEEEauthorrefmark{2}nikolaos.pappas@liu.se, 
        \IEEEauthorrefmark{3}rajshekhar.bhat@iitdh.ac.in. 
}
\thanks{The work of N. Pappas has been supported in part by the Swedish Research Council (VR), ELLIIT, Zenith, and the European Union (ETHER, 101096526). 
The work of Rajshekhar V Bhat has been supported by the Science and Engineering Research
Board, Department of Science and Technology, Government of India, under
Project SRG/2020/001545.
}}

\maketitle

\begin{abstract}

We consider a source node deployed in a real-time monitoring application that needs to sample a stochastic process and \emph{convey its state} timely and accurately to a destination over a wireless ON/OFF channel. The source can either process a raw sample to determine its current state and transmit that information or transmit the raw sample and let the destination determine the state. The source is subjected to an average cost constraint, and it cannot sample, process, and transmit at all the time instants due to the associated costs. 
When the destination does not receive information, it uses the previous information as an estimate of the current state, which, if it matches the actual state at the source, the distortion is considered to be zero. The objective is to minimize average expected distortion subject to constraints on the average expected age of information (AoI) of states of interest and costs incurred by the source, where the AoI of a state increases if no status update is received, else drops to unity. We derive a stationary randomized policy (SRP) to solve the formulated problem, for which we obtain the expression for the expected AoI under the SRP using a lumpability argument on the two-dimensional discrete-time Markov chain formed using AoI and instantaneous distortion as states. We extensively study the impact of the system parameters on the average distortion under the SRP and draw significant conclusions. 
\end{abstract}

\section{Introduction}

Next-generation communication systems require fresh delivery of information. Age of Information (AoI) as a metric of the freshness of information was introduced in \cite{aoi-sanjit}. 
Since then, it has received significant attention from the research community \cite{Book_Nikos, Book_Yates, Pappas2023age} and has led to the introduction and optimization of several other related metrics, such as the non-linear AoI \cite{KostaTCOM2020, KostaJSAC2021}, the Age of Incorrect Information \cite{AoII, AoIE}, Age of Synchronization \cite{AoS}, version AoI \cite{VAoI_Yates, VAoI_Ulukus}, and Age of Actuation \cite{AoA2023}. Minimizing AoI requires joint optimization of sampling intervals, queuing and transmission delays, allocation of resources such as transmit power and bandwidth, and also the processing delay in case the information needs to be processed for being useful \cite{Book_Nikos, Preprocess-Tony, AoP, AoPI}.

Queuing and transmission delays, which affect the AoI, may be reduced by compression or packet management techniques or by adopting high transmit power. In the former case, one may incur distortion or processing delay, leading to a trade-off between AoI, distortion, and transmission power. Such trade-offs appear in systems requiring fresh information delivery, as studied in \cite{age-distortion-RVB, QoI_Rahul, Preprocess-Tony}.   
Another approach used for a fresh delivery of information is by deciding not to transmit the information to the destination if the information available at the destination is in sync with the information at the source and has been considered in \cite{AoII, AoIE} and \cite{NikosICAS21} by comparing the state of the stochastic process at the source and an estimate of the state of the process at the destination. These works assume that the source is aware of its state, without requiring any processing, and that of the estimate at the destination due to the acknowledgment it receives from the destination after successful transmission. 
However, in real-world systems, such as video surveillance, one may need to process the sampled information to know the state of the stochastic process, and has not been accounted for in the earlier works, including \cite{AoII, AoIE} and \cite{Preprocess-Tony}. 
In this work, we aim to decide whether to process a sample at the transmitter or not. 
%\st{If no processing is carried out, the transmitter cannot find out if its state and the state at the destination are the same.} 
If processing is carried out at the transmitter, we seek to exploit it to know the state of the stochastic process being monitored and then decide whether to transmit or not. The processing also helps reduce the transmission cost incurred for successfully delivering packets.

%For the remote estimation of a process, distortion is used to measure the quality of information. In this work, we consider the minimization of distortion with AoI and cost constraints for a source and destination pair.
\begin{figure*}[t]
	\centering
	\resizebox{0.98\textwidth}{!}{%
		\includegraphics{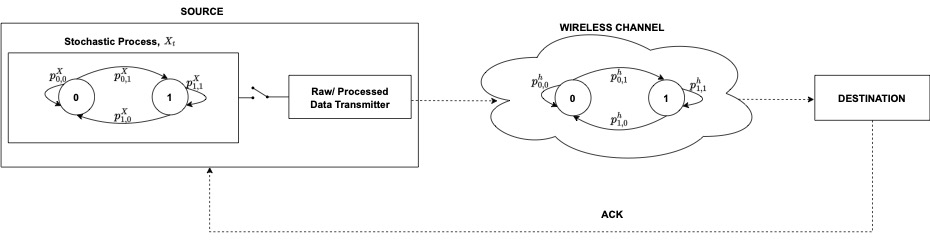}
	}	
	\caption{The source samples the stochastic process and transmits the data, with or without processing, to the destination through a wireless channel. In return, the source receives an acknowledgment corresponding to the transmitted data.}
	\label{fig:system-model}
\end{figure*}

In the present work, we consider a source and destination pair as shown in Fig. \ref{fig:system-model}. The source monitors a stochastic process, and the destination requires timely information on the process at the source to perform a specific task. The information from the source is transmitted to the destination through a wireless channel, and the source receives an acknowledgment (ACK) back from the destination. In order to transmit timely information, the source has to sample the process, and we assume that the sample must be processed to identify the current state of the stochastic process. For instance, in applications such as autonomous driving, forest surveillance, and factory monitoring, sampled data must be processed to detect obstacles, monitor forest fires, and detect anomalies.
The processing can be carried out at the source and transmitted to the destination. However, the source must pay a specific cost for data processing. The other possibility is to transmit the raw information, which is then processed at the destination. The trade-off to be considered while making the decision is the following: 
If the source processes a sample incurring a processing cost, and the current state is in sync with the estimate at the destination, the sample need not be transmitted, thereby avoiding the transmission cost. 
If they are not in sync,  the source needs to transmit the sample, and since the sample is already processed, the transmission cost is reduced. However, due to the processing cost, if the state of the stochastic process is out of sync with the estimate at the destination, processing the sample might not have been helpful in hindsight. 
% If the destination is already in-sync with the source about the process, transmitting the   information is unnecessary. %(the destination gets to know about the current state of the process only after processing the received data). 
% Hence, if the source processes the data, it will reduce the transmission cost and may decide not to transmit the data since the source knows the estimate of the process at the destination from the acknowledgment it has received. However, processing before transmission is unnecessary when the destination is not in-sync with the source. 
Towards incorporating and addressing the above trade-off for fresh delivery of information, the main contributions of the current work are:
\begin{itemize}
    \item We formulate a problem for the long-term average expected distortion minimization subject to average expected AoI and cost constraints, where instantaneous cost includes sampling, processing, and transmission costs.     
    \item We propose a stationary randomized policy (SRP) that decides when to sample the process and whether to transmit raw or processed information to the destination. We derive expressions for the expected distortion and AoI under the SRP. We show that a two-dimensional discrete-time Markov Chain (DTMC),  which considers instantaneous AoI and distortion as the states, is lumpable, due to which, the DTMC can be reduced to a one-dimensional DTMC with the AoI as the state, simplifying the derivation of the expression for the expected AoI. 
\end{itemize}

\section{System Overview}
We consider a source monitoring a stochastic process and a destination seeking timely and accurate information on the state of the stochastic process. 
The source is required to sample the stochastic process and convey its state to the destination. The sampled data of the stochastic process has to be processed to identify the state of the stochastic process, which can be performed either at the source before transmission or at the destination after reception.\footnote{The cost of processing at the destination is considered irrelevant.}
We assume that the number of bits required to represent the raw information sample is always higher than that for the processed information.
In this section, we present the source state model, channel model, actions and the costs incurred, AoI and distortion models, constraints, and problem formulation.

\subsection{Source State Model}
We consider a time-slotted system with slot index $t \in \{1,2, \dots\}$. Let $X_t$ represent the \emph{state} of the stochastic process at the source at time instant $t$ and $\hat{X}_t$ represent the estimate that the destination has about the state of the stochastic process, $X_t$, where $\hat{X}_t = X_{\tau}$ and $\tau$ is the latest time instant at which a sample from the source was received at the destination.
%The stochastic process $X_t$ at the source can be either in a normal state (State $0$) or alarm state (State $1$). %Let $x$ represent the state of $X_t$ at time instant $t$ and 
The state transition probability $\mathbb{P}(X_{t+1} = x'|X_{t} = x)$ represents the probability of transition from state $x$ to $x'$. We consider that  $X_t \in \{0,1\}$.\footnote{As previously stated, the raw sample of the stochastic process from which its state, $X_t$, is extracted may require more bits to represent. For instance, in a surveillance application, the raw sample may be an image requiring multiple bits to represent, and after processing, the normality or criticality of the state of the environment under surveillance may be revealed, which can be represented only in two bits.} 
We denote $\mathbb{P}(X_{t+1} = 0|X_{t} = 1) = p_{0,1}^X$, $\mathbb{P}(X_{t+1} = 1|X_{t} = 0) = p_{1,0}^X$, $\mathbb{P}(X_{t+1} = 0|X_{t} = 0) = p_{0,0}^X$ and $\mathbb{P}(X_{t+1} = 1|X_{t} = 1) = p_{1,1}^X$.

\subsection{Channel Model}
We consider a wireless channel between the source and the destination. 
We assume that the channel state $h_t$ in slot $t$ can either be in a good or a bad state represented by state $1$ or state $0$, respectively. Successful data transmission depends on the channel state in a slot and whether raw or processed information is transmitted through the channel. When the channel is in state $i$, let $p_i^r$ be the probability of successfully transmitting raw information and let $p_i^p$ be the probability of successfully transmitting the processed information. When the channel is in the bad state, if the raw information, having more number of bits, is transmitted, it will not be delivered to the destination, i.e., $p_0^r = 0$, whereas with non-zero probability the processed information is received successfully when the channel is in the bad state, i.e., $0<p_0^p<1$. When the channel is in the good state, we assume that the processed information is successfully received every time, i.e., $p_1^p=1$, whereas the raw information is successfully received with non-zero probability, i.e., $0 < p_1^r \leq 1$.
We consider that the destination transmits an acknowledgment (ACK) to the source, which helps the source to know the estimate, $\hat{X}_t$, that the destination has about the state, $X_t$ of the stochastic process.

%---------------- Cost Model ------------------------------------

\subsection{Actions and Costs Incurred}
%A sample has to be processed to determine the state of the process at the source. 
The source can choose one of the following actions in  a slot: (i) do not sample the stochastic process $X_t$, (ii) sample and transmit the raw information, (iii) sample, process, and decide not to transmit the information because the estimate matches the state of the stochastic process, and (iv) sample, process, and transmit the processed information. The actions (i), (ii), (iii) and (iv) incur $c_1$, $c_2$, $c_3$ and $c_4$ units as costs, respectively. We define the following indicator functions:
%We define $c_1$ as the cost incurred for sampling and transmission of raw information, $c_2$ as the cost incurred for sampling, processing, and not transmitting the information, and $c_3$ as the cost incurred for sampling, processing, and transmitting the processed information and no cost is incurred when the stochastic process $X_t$ is not sampled. The indicator functions for the different actions are defined below.
%Let $I_0(t)$ be the indicator function that represents the action not to sample the process $X_t$ in slot $t$ and is given by
\begin{flalign*} 
\begin{aligned}
	& I_0(t) =
	\begin{cases}
		1, & \text{if source does not sample in slot $t$,}\\ %if the process is not sampled in time slot $t$,} \\
		0, & \text{otherwise,}\\
	\end{cases}
\end{aligned}
\end{flalign*}
%Let $I_1(t)$ be the indicator function that represents the action to sample the process and transmit the raw data in slot $t$ and is given by,
\begin{flalign*} 
\begin{aligned}
	& I_1(t) =
	\begin{cases}
		1, & \text{if the source samples and transmits }\\
               & \text{the raw data in slot $t$,}\\
		0, & \text{otherwise,}\\
	\end{cases}
\end{aligned}
\end{flalign*}
%Let $I_2(t)$ be the indicator function that represents the action to sample, process the information, and not transmit the processed information in time slot $t$ and is given by,
\begin{flalign*} 
\begin{aligned}
	& I_2(t) =
	\begin{cases}
            1, & \text{if the source samples, processes, and  } \\
              & \text{does not transmit the data in slot $t$,}\\
		0, & \text{otherwise,}\\
	\end{cases}
\end{aligned}
\end{flalign*}
and 
%Let $I_3(t)$ be the indicator function that represents the action to sample, process, and transmit the processed data in time slot $t$ and is given by,
\begin{flalign*} 
\begin{aligned}
	& I_3(t) =
	\begin{cases}
		1, & \text{if the source samples, processes, and }\\
		& \text{transmits the processed data in slot $t$,}\\
		0, & \text{otherwise.}\\
	\end{cases}
\end{aligned}
\end{flalign*}

% We consider the following average cost constraint at the source:
% \begin{align}\label{eq:avg-cost}
% \lim_{T\rightarrow \infty}\frac{1}{T}\sum_{t=1}^{T}\mathbb{E}\left[ c_1 I_1(t) + c_2 I_2(t) + c_3 I_3(t)  \right]\leq \bar{C},
% \end{align}
% where $\bar{C}$ is the threshold on the average cost at the source and the expectation is with respect to the randomness in stochastic process $X_t$, the channel state $h_t$ and the actions. 

\subsection{Age of Information (AoI) Model}
In certain scenarios, it may be necessary to update information on the stochastic process more frequently when it is in certain states than in others. This must be accounted for in making sampling and transmission decisions. In practice, such a situation can occur if one state is more critical than the other\cite{control_alarm}. Hence, we consider AoI only for the state $1$ at the destination, denoted by $a^d_{1}(t)$ in slot $t$. It evolves as follows: 
\begin{equation}\label{eq:age_evolution}
	a^d_{i}(t+1) =
	\begin{cases}
    	0, & \text{with probability } p_1^r \text{ when} \\
     & I_1(t)=1, h_t = 1 \text{ and } \hat{X}_t = 0, \\
    	 0,   & \text{with probability } p_0^p \text{ when } \\
         & I_3(t)=1, h_t = 0\text{ and } \hat{X}_t = 0, \\
          0, & \text{with probability } 1  \text{ when } \\
               & I_3(t)=1, h_t = 1 \text{ and } \hat{X}_t = 0, \\
     
        1, & \text{with probability } p_1^r \text{ when } \\   
        & I_1(t)=1, h_t = 1 \text{ and } \hat{X}_t = 1,\\
        1,  & \text{with probability } p_0^p \text{ when } \\ 
           & I_3(t)=1, h_t = 0\text{ and } \hat{X}_t = 1, \\
        1, & \text{with probability } 1 \text{ when }   \\ 
           & I_3(t)=1, h_t = 1 \text{ and } \hat{X}_t = 1, \\
    	a^d_{i}(t) + 1, & {\rm otherwise}.\\
	\end{cases}
\end{equation}
The source can keep track of the AoI at the destination for state $1$, based on the acknowledgment it receives from the destination after successfully delivering the raw or processed data in a slot. In words, \eqref{eq:age_evolution} indicates that the age of state $1$ at the destination drops to $1$ if we sample, process, and transmit the processed data, and the destination successfully receives and decodes the state as state $1$, or when we sample the data, transmit it to the destination and the destination successfully receives, processes and decodes the state as state $1$. Also, the age of state $1$ is irrelevant if the destination decodes the state as state $0$, i.e., when $\hat{X}_t=0$. To account for this, we drop the instantaneous age of state $1$ to $0$ when $\hat{X}_t=0$. 

% We consider the AoI of State $1$ at the destination as a constraint:
% \begin{align}\label{eq:avg-alarmstate-age}
% \lim_{T\rightarrow \infty} \frac{1}{T} \sum_{t=1}^{T}  \mathbb{E} \left[ a_{1}^d(t)\right] \leq \bar{A},
% \end{align}
% where $\bar{A}$ is the threshold on the average age of State $1$, and the expectation is with respect to the   randomness in stochastic process $X_t$, the channel state $h_t$ and the actions. 

\subsection{Distortion Model}
The main objective of the current work is to maintain the information at the destination to be as fresh and accurate as possible. To quantify the accuracy of the estimate $\hat{X}_t$ at the destination regarding the stochastic process $X_t$, we define and consider the following distortion function $\Delta(\cdot,\cdot)$: 
\begin{equation}\label{eq:def-distortion}
    \Delta(X_t, \hat{X}_t) =
    \begin{cases}
    0, & \text{if } X_t = \hat{X}_t, \\
    1, & \text{if } X_t \neq \hat{X}_t.
    \end{cases}
    % \mathbb{I}{\{X_t \neq \hat{X}_t\}}
\end{equation}

\subsection{Problem Formulation}
This work aims to minimize the long-term average distortion subject to constraints on the normalized long-term average AoI of state $1$ and the average cost, by solving the following: 
%Our goal is to solve the following optimization problem: 
\begin{subequations}\label{eq:main-opt-problem}
\begin{align}
	D^*  =\underset{\pi}{\text{min}}&  \lim_{T\rightarrow \infty}\frac{1}{T}\sum_{t=1}^{T}\mathbb{E}\left[\Delta(X_{t}, \hat{X}_{t})\right],\\
	\text{subject to}&\;  \lim_{T\rightarrow \infty}  \frac{1}{T} \sum_{t=1}^{T}  \mathbb{E} \left[a_{1}^d(t)\right] \leq \bar{A},\label{eq:avg-alarmstate-age}\\ 
	 &  \lim_{T\rightarrow \infty}\frac{1}{T}\sum_{t=1}^{T}\mathbb{E}\left[ c_1 I_1(t) + c_2 I_2(t) + c_3 I_3(t)  \right]\leq \bar{C},\label{eq:avg-cost}\\
     &I_1(t)+ I_2(t)+ I_3(t) \leq 1, \; \forall t\in \{1,2,\ldots\},
\end{align}
\end{subequations}
where $\bar{C}$ is the threshold on the average cost, $\bar{A}$ is the threshold on the average age of State $1$, and  $\pi$ is a policy, a specification of the decision rule to be used at each slot $t\in \{1,2\ldots\}$. The expectations are taken with respect to the chosen policy $\pi$, the stochastic process, $X_t$, and the channel state,  $h_t$. 

\begin{table}[t]
\vspace{.1cm}
\caption{Notations}
\centering % to have the caption near the table
\begin{tabular}{r c p{6cm} }
\toprule
$X_{t}$ & $\triangleq$ & The state of the process at time $t$\\
$\hat{X}_t$ & $\triangleq$ & The estimate of the state of $X_t$ at the destination\\
$h_t$ & $\triangleq$ & The state of the channel \\
$p_{i,j}^X$ & $\triangleq$ & Transition probability of the process from State $i$ to State $j$\\
$p_{i,j}^h$ & $\triangleq$ &  Transition probability of the channel from State $i$ to State $j$ \\  
$p_0^p$ & $\triangleq$ & The probability of successfully receiving the processed information when the channel is in bad state\\
$p_1^r$ & $\triangleq$ & The probability of successfully receiving the raw update when the channel is in good state, State $1$\\
$a_1^d(t)$ & $\triangleq$ & The age of information of State $1$ at the destination at time $t$ \\
$\bar{A}$ & $\triangleq$ & The bound on the average age of State $1$ \\
$\bar{C}$ & $\triangleq$ & The bound on the average cost \\
$c_1$ & $\triangleq$ & The cost incurred to sample and transmit the raw information\\
$c_2$ & $\triangleq$ & The cost incurred to sample, process and not transmit the information\\
$c_3$ & $\triangleq$ & The cost incurred to sample, process and transmit the processed information\\
$p_s$ & $\triangleq$ & The probability of successfully receiving either raw or processed information at the destination\\
$p_0$ & $\triangleq$ & The probability of choosing not to sample the process\\
$p_1$ & $\triangleq$ & The probability of choosing to sample the process and transmit the raw information\\
$p_2$ & $\triangleq$ & The probability of choosing to sample and process the information and not to transmit the information\\
$p_3$ & $\triangleq$ & The probability of choosing to sample, process and transmit the processed information \\
$p_{i,j}^{\Delta}$ & $\triangleq$ & The probability of transition of the distortion from State $i$ to State $j$ \\  
\bottomrule
\end{tabular}
\label{tab:TableOfNotationForMyResearch}
\end{table}

\section{Solution}
For solving the formulated problem \eqref{eq:main-opt-problem}, we propose two approaches, namely the Partially Observable Markov Decision Process (POMDP) framework that makes the decision in a sequential fashion, and the Stationary Randomized Policy (SRP) that makes the decision independent of the time slot. 
\subsection{Constrained POMDP Framework}
In each time slot, the source has to decide on sampling, processing, and transmission of the data without observing the state of the stochastic process $X_t$. POMDP framework can be utilized to solve \eqref{eq:main-opt-problem}. Since we also consider the long-term average age constraint \eqref{eq:avg-alarmstate-age} and the cost constraint \eqref{eq:avg-cost}, the formulated problem becomes a constrained POMDP problem.
% Since the decision has to be made without observing the state of the process at the source, we consider the Constrained Partially Observable Markov Decision Process (POMDP) framework to obtain the solution.
A POMDP is defined by the tuple $(\mathcal{S}, \mathcal{U}, \mathcal{O}, \mathcal{P}, Z, c)$, where $\mathcal{S}$ represents the state space, $\mathcal{U}$ represents the action space, $\mathcal{O}$ represents the observation space, $\mathcal{P}$ represents the transition probabilities between the states conditioned on the action taken, $Z$ represents the observation probabilities, and $c$ represents the cost function. The components of POMDP for the formulated problem \eqref{eq:main-opt-problem} are explained as follows:

\subsubsection{State Space, $\mathcal{S}$}
We consider the state in slot $t$ to be, $\boldsymbol{s}_t = \left (X_t, \hat{X}_{t-1}, h_t, a_1^d(t) \right)$, where $X_t$ is the state of the process at the source in slot $t$, $\hat{X}_{t-1}$ is the estimate of the state $X_{t-1}$ at the destination, $h_t$ is the channel state in slot $t$ and $a_1^d(t)$ is the instantaneous AoI of State $1$ at the destination.
\subsubsection{Action Space, $\mathcal{U}$} 
The action in slot $t$ is represented as $u_t$ and the possible action in a slot is one of the following, (i) do not sample the stochastic process $X_t$, $u_t=0$, (ii) sample and transmit the raw information, $u_t=1$, (iii) sample, process, and decide not to transmit the information, $u_t=2$ and (iv) sample, process, and transmit the processed information, $u_t=3$.

\subsubsection{State Transition Probabilities, $\mathcal{P}$}
The transition probability from state $s_t=s$ to $s_{t+1}=s'$ under action $u_t=u$ in slot $t$ is represented by $\mathcal{P}(s'|s,u) = \mathbb{P}\{s_{t+1}=s'|s_t=s, u_t=u\}$ and can be obtained for different combinations of state and action pairs based on the evolution of the stochastic process at the source,    channel and the AoI of State $1$.

\subsubsection{Observation Space, $\mathcal{O}$}
The state $s_t$ in slot $t$ is partially observable since the decision maker cannot observe $X_t$. Based on the successful reception of the data and whether the received information was raw or processed, the possible observation $o_t$ in slot $t$ are that the destination receives (i)   no update, $o_t=0$, (ii)  the raw information, $o_t=1$ or (iii)  the processed information, $o_t=2$. 

\subsubsection{Observation Probabilities, $Z$}
The observation probability is given by $Z_{s'}(o,u) = \mathbb{P}\{o_t = o | s_{t+1} = s', u_t = u\}$ and represents the conditional probability of observing $o$ when action $u$ is taken and next state is $s'$. We can obtain it based on the different combinations of the state of stochastic process, channel state, AoI of State $1$ and actions. 

\subsubsection{Belief State Formulation}
To obtain the solution for a POMDP problem, we use the belief state formulation and obtain the continuous state MDP where the belief states are considered as the states. The belief state in a slot $t+1$ refers to the distribution over the state space $S$ and depends on the belief state in slot $t$, the action $u_t$, and the observation $o_t$. The belief about a state $s'$ is updated in every slot, and for a state $s'$, the belief update is given by:
\begin{equation}\label{eq:belief-update}
    b_o^u(s') = \frac{Z_{s'}(o,u) \sum_{s \in \mathcal{S}} \mathcal{P}(s'|s,u) b(s)}{\sum_{s' \in \mathcal{S}}  Z_{s'}(o,u) \sum_{s \in \mathcal{S}} \mathcal{P}(s'|s,u) b(s)},
\end{equation}
where $b(s)$ is the belief associated with state $s$, $b_o^u(s')$ is the updated belief for state $s'$ when action $u$ is taken and the observation $o$ is seen.
\subsubsection{Instantaneous Cost, $c$}
The instantaneous cost in state $s_t$ and action $u_t$ depends on the  (i)   distortion, (ii)  age of State $1$, and (iii)  cost for sampling, processing, and transmission. \\

In the above, we successfully cast the problem as a constrained POMDP, but solving it computationally inexpensively is challenging. We will consider this as part of future work. In the following, we obtain a low-complexity sub-optimal stationary randomized policy to solve \eqref{eq:main-opt-problem}. 

\subsection{Stationary Randomized Policy (SRP)}
A sub-optimal solution for the formulated problem \eqref{eq:main-opt-problem} can be obtained using an SRP. The SRP specifies the probability of different actions being chosen independently of slot index $t$ and the ACK. Let $p_0$ be the probability of choosing not to sample the process, $p_1$ be the probability of choosing to sample the process and transmit the raw information, $p_2$ be the probability of choosing to sample and process the information and not to transmit the information, and $p_3$ be the probability of choosing to sample, process and transmit the processed information. To obtain the optimal SRP, we reformulate the original problem \eqref{eq:main-opt-problem} using the expressions for the expected distortion and expected AoI of state $1$, which we derive next. 

\subsubsection{Derivation of Expected Distortion and AoI under SRP}
\begin{theorem}\label{thm:Dist_AoI}
The expected distortion and AoI under the SRP are given by 
    \begin{equation}\label{eq:distortion-srp}
     \mathbb{E}\left[\Delta(X_t, \hat{X}_t)\right] = %\mathbb{P}\{X \neq \hat{X}\} = 
     \frac{ 2 (1-p_s) p_{1,0}^X p_{0,1}^X}{(p_{0,1}^X + p_{1,0}^X) \left(1 + (1-p_s) \left(p_{1,0}^X - p_{0,0}^X\right)\right)},
\end{equation}
and 
\begin{equation}\label{eq:AoIExpression}
    \mathbb{E}[a_1^d(t)] = \frac{1}{p_s} \left(1-p_s\mathbb{P}\{\hat{X}=0\} \right),
\end{equation}
respectively, where 
$p_s  = p_1 p_1^r \mathbb{P} \{ h = 1\} + p_3 \left( \mathbb{P} \{ h = 1\} + p_0^p \mathbb{P} \{ h = 0\}  \right)$,
is the probability that the destination receives the information successfully in a slot, and $\mathbb{P}\{\hat{X}=0\}$ is the steady state probability of the estimate of the stochastic process being in State $0$. 
\end{theorem}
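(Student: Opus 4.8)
The plan is to prove the two formulas separately, building on the fact that under the SRP the per‑slot event ``the destination receives a (raw or processed) update'' is, in steady state, an i.i.d.\ Bernoulli trial with success probability $p_s$: the three mutually exclusive success events (raw update when $h_t=1$, processed update when $h_t=1$, processed update when $h_t=0$) have probabilities $p_1 p_1^r\mathbb{P}\{h=1\}$, $p_3\mathbb{P}\{h=1\}$ and $p_3 p_0^p\mathbb{P}\{h=0\}$, which sum to the stated $p_s$. On a successful slot the destination overwrites its estimate with the just‑sampled value, $\hat X_t = X_t$; otherwise $\hat X_t$ is unchanged.

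\textbf{Expected distortion.} First I would introduce the four‑state DTMC with state $(X_t,\hat X_t)\in\{0,1\}^2$: in each slot $X_t$ moves according to the transition probabilities $p^X_{\cdot,\cdot}$, and then, independently and with probability $p_s$, $\hat X_t$ is reset to the new value of $X_t$, while with probability $1-p_s$ it is frozen. Writing the resulting $4\times4$ transition matrix and solving the balance equations---using the $0\leftrightarrow1$ symmetry of the construction to pair up the stationary masses---gives closed forms for the stationary probabilities of the two ``mismatch'' states $(0,1)$ and $(1,0)$; their sum is $\mathbb{E}[\Delta(X_t,\hat X_t)]$, and after simplification (using $p^X_{0,0}=1-p^X_{1,0}$ to put the denominator in the form appearing in \eqref{eq:distortion-srp}) it collapses to \eqref{eq:distortion-srp}. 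The same stationary distribution yields $\mathbb{P}\{\hat X=0\}$; in fact it equals the stationary probability $p^X_{0,1}/(p^X_{0,1}+p^X_{1,0})$ of the source chain itself, since $\hat X$ is just a random subsampling of the stationary process $X$.

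\textbf{Expected AoI of state $1$.} Here I would exploit the lumpable/renewal structure of \eqref{eq:age_evolution}: between two consecutive successful receptions the age of state $1$ simply increases by one each slot, and at a successful reception it resets to $0$ if the decoded state is $0$ and to $1$ if it is $1$. Hence, writing $\sigma$ for the most recent successful slot up to time $t$, $a^d_1(t)=(t-\sigma)+\mathbbm{1}\{\hat X_t=1\}$. Since successes are i.i.d.\ $\mathrm{Bernoulli}(p_s)$, $t-\sigma$ is geometric with $\mathbb{P}\{t-\sigma=k\}=p_s(1-p_s)^k$, so $\mathbb{E}[t-\sigma]=(1-p_s)/p_s$; and since the success process is independent of the source process, $\hat X_t=X_\sigma$ is stationary‑distributed and independent of $t-\sigma$, so $\mathbb{E}\big[\mathbbm{1}\{\hat X_t=1\}\big]=\mathbb{P}\{\hat X=1\}=1-\mathbb{P}\{\hat X=0\}$. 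Adding the two contributions, $\mathbb{E}[a^d_1(t)]=(1-p_s)/p_s+1-\mathbb{P}\{\hat X=0\}=\frac{1}{p_s}\big(1-p_s\,\mathbb{P}\{\hat X=0\}\big)$, which is \eqref{eq:AoIExpression}. Equivalently one states this as: the two‑dimensional chain on (AoI, instantaneous distortion) is strongly lumpable onto its AoI coordinate, reducing it to the one‑dimensional ``increment‑by‑one or reset'' chain above.

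\textbf{Where the work is.} The conceptual content is light, but two points need care. The first is making the reduction precise: one must verify that the auxiliary coordinates (source state, channel state, estimate) do not influence the transition law of the AoI coordinate beyond what $p_s$ already encodes---this is exactly where an i.i.d.\ (or, at least, steady‑state) channel assumption is used, and where the reset‑to‑$0$‑versus‑$1$ split must be shown to be independent of the elapsed time $t-\sigma$ and to carry the stationary $\hat X$‑marginal. The second is the bookkeeping for the distortion: writing and solving the $4\times4$ balance system and massaging the resulting rational function into the compact form \eqref{eq:distortion-srp} is routine but error‑prone, and it is what actually pins down the factor $2\,p^X_{1,0}p^X_{0,1}$ and the precise $(1-p_s)$‑dependence.
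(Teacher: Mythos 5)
Your proposal is correct, and it splits naturally into two halves. The distortion half is essentially the paper's own argument: the same four-state chain on $(X,\hat X)$ built by ``move $X$, then with probability $p_s$ overwrite $\hat X$ with the new value,'' the same balance-equation solution, and the same identification of $\mathbb{E}[\Delta(X_t,\hat X_t)]$ as the total stationary mass of the two mismatch states; the paper just records the resulting stationary probabilities explicitly before summing. For the AoI half you take a genuinely different and shorter route: the paper forms the two-dimensional chain on $(a_1^d,\Delta)$, verifies strong lumpability with respect to the partition $\mathcal{S}_i=\{(i,0),(i,1)\}$, and then computes the stationary law of the lumped age chain ($p_0=p_s\mathbb{P}\{\hat X=0\}$, $p_1=p_s(1-p_s\mathbb{P}\{\hat X=0\})$, $p_n=(1-p_s)^{n-1}p_1$) and evaluates $\sum_i i\,p_i$; you instead write the age as (slots since the last successful delivery) plus $\mathbbm{1}\{\hat X_t=1\}$, use the geometric law of the first term and the independence of the success process from the source process for the second, and add expectations. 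The two computations agree—your decomposition reproduces exactly the paper's $p_0,p_1,p_n$—and your renewal argument makes transparent why the distortion coordinate can be dropped, which is precisely the content of the paper's lumpability check. Both your argument and the paper's rest on the same unstated simplification, namely treating the per-slot success event as i.i.d.\ Bernoulli($p_s$) even though $h_t$ is Markov, with $p_s$ computed from the channel's stationary distribution; you at least flag this explicitly. Two cosmetic remarks: your ``last success at $\sigma$'' bookkeeping is off by one slot relative to the paper's convention that slot-$t$ events determine $a_1^d(t+1)$, but the geometric distribution and hence the expectation are unaffected; and your side claim $\mathbb{P}\{\hat X=0\}=\mathbb{P}\{X=0\}$ is correct (the theorem leaves this quantity symbolic), though whether it is written as $p^X_{0,1}/(p^X_{0,1}+p^X_{1,0})$ or with the indices swapped depends on which of the paper's two conflicting conventions for $p^X_{i,j}$ one adopts.
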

\begin{proof}
\begin{figure}[t]
\vspace{0.08cm}
    \centering
    \begin{tikzpicture}[scale=0.45, transform shape, ->, >=stealth', shorten >=1pt, auto, node distance=5cm, semithick]
  \tikzstyle{state}=[circle, draw, minimum size=0.5cm, align=center]
  
  % Define the states
  \node[state] (A) {State $0$ \\ $X,\hat{X} = (0,0)$};
  \node[state, right of=A, below of=A] (B) {State $1$ \\ $X,\hat{X}= (0,1)$};
  \node[state, left of=B, below of=A] (C) {State $2$ \\ $X,\hat{X}= (1,0)$};
  \node[state, below of=A,  node distance=10cm] (D) {State $3$ \\ $X,\hat{X}= (1,1)$};
  
  % Draw the arrows
  
  \draw (A) edge[loop right, out=135, in=45, distance=1.8cm] node[above] {$p^{X}_{0,0}$} (A);
  \draw (A) edge[bend right] node[above, rotate=90] {$p^{X}_{0,1} p_s$} (D);
  \draw(A) edge[bend right] node[above, rotate=50] {$p^{X}_{0,1} (1-p_s)$} (C);
  
  \draw (B) edge[loop right, out=45, in=-45, distance=1.8cm] node[below, rotate=90]  {$p^{X}_{0,0} (1-p_s)$} (B);
  \draw(B) edge[bend right] node[above, rotate=-50]  {$p^{X}_{0,0} p_s$} (A);
  \draw(B) edge[bend right=0] node[above, rotate=60]  {$p^{X}_{0,1}$} (D);
  
  \draw (C) edge[loop right, out=145, in=-135, distance=1.8cm] node[above,rotate=90]  {$p^{X}_{1,1} (1-p_s)$} (C);
  \draw(C) edge[bend right] node[below, rotate=-50] {$p^{X}_{1,1} p_s$} (D);
  \draw (C) edge[bend right=0] node[above, rotate=60] {$p^{X}_{1,0}$} (A);

  \draw (D) edge[loop right, out=-135, in=-45, distance=1.8cm] node[below]  {$p^{X}_{1,1}$} (D);
  \draw(D)  edge[bend right] node[above, rotate=-85] {$p^{X}_{1,0} p_s$} (A);
  \draw(D) edge[bend right] node[below, rotate=45] {$p^{X}_{1,0} (1-p_s)$} (B);
\end{tikzpicture}
     \caption{The evolution of two-dimensional DTMC with $(X, \hat{X})$ as the state.
     % \textcolor{red}{It will be good to use Tikz to draw the DTMC.}
     }
    \label{fig:MC-distortion}
\end{figure}
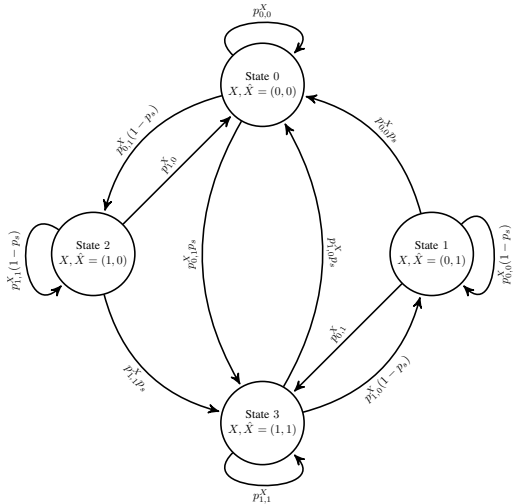
We first consider derivation of the expected distortion, which under the SRP is given by
\begin{align}\label{eq:distortion-sum}
    \mathbb{E}&\left[\Delta(X_t, \hat{X}_t)\right] = \mathbb{P}\{X_t \neq \hat{X}_t\}\nonumber\\
   & = \mathbb{P}\{X_t=0, \hat{X}_t=1\} +  \mathbb{P}\{X_t=1, \hat{X}_t=0\}.
\end{align}
% We can show that, %{\color{blue} (the result is verified using simulations)},
% \begin{equation}\label{eq:distortion-srp}
%      \mathbb{E}\left[\Delta(X_t, \hat{X}_t)\right] = %\mathbb{P}\{X \neq \hat{X}\} = 
%      \frac{ 2 (1-p_s) p_{1,0}^X p_{0,1}^X}{(p_{0,1}^X + p_{1,0}^X) \left(1 + (1-p_s) \left(p_{1,0}^X - p_{0,0}^X\right)\right)},
% \end{equation}
%The probability that the destination receives the information successfully in a slot is given by
%\begin{equation}
%    p_s  = p_1 p_1^r \mathbb{P} \{ h = 1\} + p_3 \left( \mathbb{P} \{ h = 1\} + p_0^p \mathbb{P} \{ h = 0\}  \right),
%\end{equation}
Now, using $p_s$ and the fact that the steady-state probabilities of the channel states are $\mathbb{P} \{ h = 0\}  = {p_{1,0}^h}/({p_{0,1}^h + p_{1,0}^h})$ and $ 
\mathbb{P} \{ h = 1\}  = {p_{0,1}^h}/{(p_{0,1}^h + p_{1,0}^h)}$,  %\subsubsection*{Step-by-step process to obtain For 
we derive \eqref{eq:distortion-srp} in the following. 
We consider the two-dimensional Discrete-Time Markov Chain (DTMC) with the state of the stochastic process and the estimate of the stochastic process as the state, that is, $(X, \hat{X})$. The probability transition matrix $P$ for the considered Markov Chain (MC) illustrated in Fig. \ref{fig:MC-distortion} is given by,
\begin{equation*}
\begin{bmatrix}
p_{0,0}^X & 0 &  p_{0,1}^X (1-p_s) & p_{0,1}^X p_s\\
p_{0,0}^X p_s &  p_{0,0}^X (1-p_s) & 0 &  p_{0,1}^X \\
p_{1,0}^X & 0 & p_{1,1}^X (1-p_s) &  p_{1,1}^X p_s \\
p_{1,0}^X p_s &  p_{1,0}^X (1-p_s) & 0 &  p_{1,1}^X
\end{bmatrix}
\end{equation*}
%and then obtain the probability transition matrix for the considered Markov Chain (MC). 
We obtain the steady-state probabilities for each state of the MC by solving $\boldsymbol{\pi} P = \boldsymbol{\pi}$ where $\boldsymbol{\pi}  =[\pi_0, \pi_1, \pi_2, \pi_3]$ are the steady-state probabilities for the $4$-state MC, where 
\begin{align*}
    \pi_0 &= \frac{ \left(1 - p_{1,1}^X (1-p_s)\right) p_{1,0}^X}{(p_{0,1}^X + p_{1,0}^X) \left(1 + (1-p_s) \left(p_{1,0}^X - p_{0,0}^X\right)\right)}, \\
    \pi_1 &= \frac{ (1-p_s) p_{1,0}^X p_{0,1}^X}{(p_{0,1}^X + p_{1,0}^X) \left(1 + (1-p_s) \left(p_{1,0}^X - p_{0,0}^X\right)\right)}, \\
    \pi_2 &= \frac{ (1-p_s) p_{1,0}^X p_{0,1}^X}{(p_{0,1}^X + p_{1,0}^X) \left(1 + (1-p_s) \left(p_{1,0}^X - p_{0,0}^X\right)\right)}, 
    \end{align*}
 \begin{align*}
    \pi_3 &= \frac{ \left(1-(1-p_s) p_{0,0}^X \right) p_{0,1}^X}{(p_{0,1}^X + p_{1,0}^X) \left(1 + (1-p_s) \left(p_{1,0}^X - p_{0,0}^X\right)\right)}.
        \end{align*}
The result in  \eqref{eq:distortion-srp} follows using  \eqref{eq:distortion-sum} and noting that $\mathbb{E}[\Delta(X_t, \hat{X}_t)] = \pi_1 + \pi_3$.

\begin{figure*}[t]
	\centering
	\resizebox{0.9\textwidth}{!}{%
    \begin{tikzpicture}[scale=0.45, transform shape, ->, >=stealth', shorten >=1pt, auto, node distance=5cm, semithick]
  \tikzstyle{state}=[circle, draw, minimum size=1.5cm, align=center]
  
  % Define the states
  \node[state] (0) {$0$};
  \node[state, right of=0] (1) {$1$};
  \node[state, right of=1] (2) {$2$};
  \node[state, right of=2] (n1) {$(n-1)$};
  \node[state, right of=n1] (n) {$n$};
  
  % Draw the arrows
  
  \draw (0) edge[loop right, out=145, in=-135, distance=1.8cm] node[above, rotate=90] {$p_s\mathbb{P}\{\hat{X}=0\}$} (0);
  \draw(0)  edge[bend left=50] node[above] {$(1-p_s) + p_s \mathbb{P}\{\hat{X}=1\}$} (1.north);

  \draw (1) edge[loop right, out=145, in=-135, distance=1.8cm] node[above, rotate=90] {$p_s\mathbb{P}\{\hat{X}=1\}$} (1);
  % \draw (1)  edge[bend right] node[above] {$(1-p_s)$} (2);
  % \draw(1)  edge[bend left] node[above] {$p_s\mathcal{P}\{\hat{X}=0\}$} (0.south);
    \draw (1)  edge[bend left] node[above] {$(1-p_s)$} (2);
  \draw (1)  edge[bend left=60] node[below, pos = 0.25, anchor= 80] {$p_s\mathbb{P}\{\hat{X}=0\}$} (0.south);

   \draw (2)  edge[bend left=50] node[below] {$p_s\mathbb{P}\{\hat{X}=1\}$} (1.south);
   \draw (2)  edge[bend left=60] node[below, pos=0.35] {$p_s\mathbb{P}\{\hat{X}=0\}$} (0.south);

   \path[->] (2)  edge[bend left] node[above] {$(1-p_s)$} (12.3,0.2);
   
   \path (2) -- node[auto=false]{\ldots} (n1);

   \draw (n1)  edge[bend left=60] node[below, pos = 0.35, anchor=60] {$p_s\mathbb{P}\{\hat{X}=0\}$} (0.south);
   \draw (n1)  edge[bend left] node[above] {$(1-p_s)$} (n);
   \path (12.68,0.2)  edge[bend left] node[above] {$(1-p_s)$} (n1);

    \draw (n)  edge[bend left=60] node[below, pos = 0.35, anchor=60] {$p_s\mathbb{P}\{\hat{X}=0\}$} (0.south);
    \draw (n)  edge[bend left] node[below] {$p_s\mathbb{P}\{\hat{X}=1\}$} (n1);
    \path (n)  edge[bend left] node[above] {$(1-p_s)$} (22.3,0.2);

    \path (n) -- node[auto=false]{\ldots} (24,0);
 \end{tikzpicture}}
 \vspace{-0.5cm}
    \caption{The evolution of the one-dimensional DTMC with the age of State $1$, $a_1^d$ as the state.} %{\color{red} Sir, is the DTMC fine now or do I need to add state $3$ and state $(n-1)$?} \color{red}{You need to add the transition from state to 2 (no need to add state 3, just the arrow leaving state 2, an arrow entering state n-1, and an arrow exiting state n.}}
    \label{fig:MC-age}
\end{figure*}
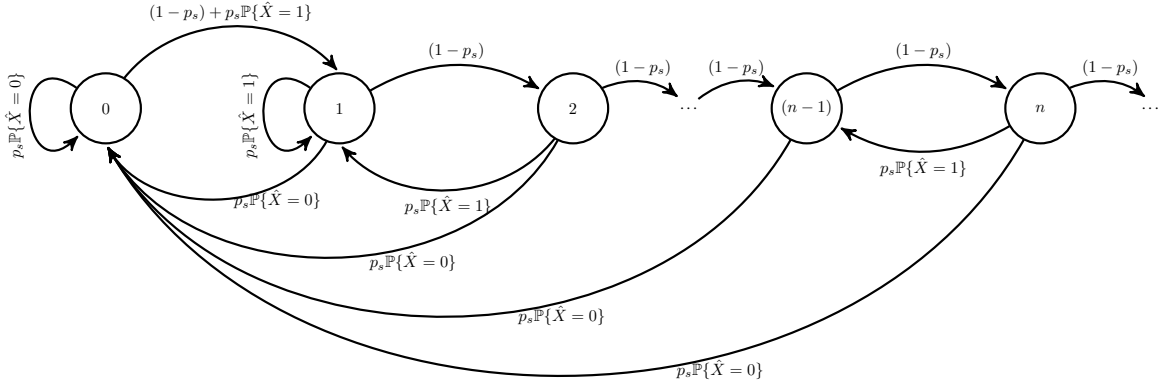
\begin{figure*}[!bp]
\hrule
\begin{align}\label{eq:lumpable-MC}
P &= 
\begin{blockarray}{cccccccc}
 \vspace{1ex}
& (0,0) & (0,1) & (1,0) & (1,1) & (2,0) & (2,1) & \dots \\
\begin{block}{c(ccccccc)}
 \vspace{2ex}
 (0,0) & p_s \mathbb{P}\{\hat{X}=0\}  & 0 & p_s \mathbb{P}\{\hat{X}=1\} + (1-p_s) p_{0,0}^{\Delta} & (1-p_s) p_{0,1}^{\Delta} & 0 & 0  & \dots\\
 \vspace{2ex}
 (0,1) & p_s\mathbb{P}\{\hat{X}=0\}  & 0 & p_s \mathbb{P}\{\hat{X}=1\} + (1-p_s) p_{1,0}^{\Delta} & (1-p_s) p_{1,1}^{\Delta} & 0 & 0  & \dots\\
  \vspace{2ex}
 (1,0) & p_s \mathbb{P}\{\hat{X}=0\}  & 0 & p_s \mathbb{P}\{\hat{X}=1\} & 0 & (1-p_s) p_{1,0}^{\Delta} & (1-p_s) p_{0,0}^{\Delta}  & \dots\\
  \vspace{2ex}
 (1,1) & p_s \mathbb{P}\{\hat{X}=0\}  & 0 & p_s \mathbb{P}\{\hat{X}=1\} & 0 & (1-p_s) p_{1,1}^{\Delta} & (1-p_s) p_{0,0}^{\Delta}  & \dots\\
  \vspace{2ex}
 \vdots & \vdots & \vdots & \vdots & \vdots & \vdots & \ddots \\
\end{block}
\end{blockarray}
\end{align}
\end{figure*}

%
%We first  show that the two-dimensional DTMC with the AoI of State $1$ and the distortion as the state of the Markov Chain, i.e., $(a_1^d,\; |X-\hat{X}|)$ is lumpable 
%\subsubsection{Lumpability of Two Dimensional DTMC with $\left(a_1^d, \Delta (X, \hat{X}) \right)$ as the state}
The expected AoI can be obtained by computing its stationary distribution. However, the transition probability may depend on the current and subsequent distortion, so the evolution of AoI and distortion may be coupled. Hence, we consider the two-dimensional DTMC with the AoI of the State $1$ and the distortion as the state of the Markov chain, that is, $(a_1^d,\; |X-\hat{X}|)$, and the corresponding transition matrix $P$ given by \eqref{eq:lumpable-MC}. We first show that the DTMC is lumpable and obtain a one-dimensional DTMC with AoI as the state to obtain \eqref{eq:AoIExpression}.
%To obtain the expected AoI, we need to compute its distribution, and since it evolves as a Markov chain, we need to obtain its stationary distribution. However, in general, it may possible that the transition probability from a realization of an AoI to a different realization, depends on the current and subsequent distortion, i.e., how AoI evolves may depend on how distortion evolves. In other words, their evolution may be coupled. 

Consider the partition of the two-dimensional state $\mathcal{S}$ of MC as $\mathcal{S} =\{ \mathcal{S}_0, \mathcal{S}_1, \dots \}$ where $\mathcal{S}_i= \{(i,0), (i,1)\}$. %Let $P$ be the probability transition matrix for the 2D-MC. 
The transition matrix $P$ for the two-dimensional DTMC is lumpable with respect to the partition above, that is, $\mathbb{P}\{a_1^d(t+1) =j, \Delta(X_{t+1}, \hat{X}_{t+1}) = 0|a_1^d(t) =i, \Delta(X_t, \hat{X}_t) = 0\} + \mathbb{P}\{a_1^d(t+1) =j, \Delta(X_{t+1}, \hat{X}_{t+1}) = 1|a_1^d(t) =i, \Delta(X_t, \hat{X}_t) = 0\} = \mathbb{P}\{a_1^d(t+1) =j, \Delta(X_{t+1}, \hat{X}_{t+1} = 0|a_1^d(t) =i, \Delta(X_t, \hat{X}_t = 1\} + \mathbb{P}\{a_1^d(t+1) =j, \Delta(X_{t+1}, \hat{X}_{t+1} = 1|a_1^d(t) =i, \Delta(X_t, \hat{X}_t) = 1\}, \; \forall \; \mathcal{S}_i, \mathcal{S}_j$. Therefore, the two-dimensional DTMC can be reduced to the one-dimensional DTMC with the age of the State $1$, that is, $a_1^d$ as the state as shown in Fig. \ref{fig:MC-age}\cite{Lumpable-MC}. Now, from the definition of AoI of State $1$, i.e., \eqref{eq:age_evolution}, the expected AoI of State $1$ is given by $ \mathbb{E}[a_1^d(t)] = \sum_{i=1}^\infty i \cdot p_i$, where $p_i$ is the probability that age is equal to $i$. The DTMC for the AoI of State $1$ is illustrated in Fig. \ref{fig:MC-age}. We can show that $p_0 = p_s \mathbb{P}\{\hat{X}=0\}$, $p_1 =p_s \left(1-p_s\mathbb{P}\{\hat{X}=0\} \right)$ and $p_n = (1-p_s)^{n-1} p_1, \; \forall n > 1$. Therefore, the expected AoI of State $1$ is given by \eqref{eq:AoIExpression}. 
% \begin{equation}
%     \mathbb{E}[a_1^d(t)] = \frac{1}{p_s} \left(1-p_s\mathbb{P}\{\hat{X}=0\} \right),
% \end{equation}
% where $\mathbb{P}\{\hat{X}=0\} = \pi_1 + \pi_2$. {\color{red}Similarly, $\mathbb{P}\{\hat{X}=1\} = \pi_3+ \pi_4$. Further, we can show that the steady-state probabilities for the stochastic process and its estimate is given by,
% \begin{subequations}
% \begin{align}
% \mathbb{P} \{ X = 0\} = \mathbb{P} \{ \hat{X} = 0\}   & = \frac{p_{1,0}^X}{p_{0,1}^X + p_{1,0}^X}, \\ 
% \mathbb{P} \{ X = 1\} = \mathbb{P} \{ \hat{X} = 1\}   & = \frac{p_{0,1}^X}{p_{0,1}^X + p_{1,0}^X}. 
% \end{align}
% \end{subequations}}
\end{proof}

% \begin{figure}[!bp]
%     \centering
%     \includegraphics[scale=0.55]{Images/MC_distortion1.pdf}
%     \caption{The evolution of two-dimensional Discrete Time Markov Chain (DTMC) with $(X, \hat{X})$ as the state. \textcolor{red}{It will be good to use Tikz to draw the DTMC.}}
%     \label{fig:MC-distortion}
% \end{figure}

% \begin{figure*}[t]
% 	\centering
% 	\resizebox{0.99\textwidth}{!}{%
% 		\includegraphics{Images/AoI_alarm_MC.pdf}
% 	}

% 	\caption{The evolution of Discrete Time Markov Chain (DTMC) with the age of State $1$, $a_1^d$ as the state. \textcolor{red}{Jayanth, can you use tikz to draw the DTMC? As it is now it looks strange, there is not transition probability from 2 to 3, there is no entering trans. prob. to n-1 ... and it seems that n is the final state. We have to be rigorous.}}
%     \label{fig:MC-age1}
% \end{figure*}

% \begin{figure*}[t]
% 	\centering
% 	\resizebox{0.99\textwidth}{!}\includegraphics{Images/AoI_alarm_MC.pdf}
%     \caption{The evolution of Discrete Time Markov Chain (DTMC) with the age of State $1$, $a_1^d$ as the state. \textcolor{red}{The figure is not readable..}}
%     \label{fig:MC-age}
% \end{figure*}

\subsubsection{Reformulation of  \eqref{eq:main-opt-problem} under SRP} Under SRP, using Theorem~\ref{thm:Dist_AoI}, 
\eqref{eq:main-opt-problem} can be reformulated as the following problem: 
\begin{subequations}\label{eq:srp-opt-problem}
\begin{align}
	\underset{p_0, p_1, p_2, p_3\geq 0}{\text{minimize }} &\; \;\frac{ 2 (1-p_s) p_{1,0}^X p_{0,1}^X}{(p_{0,1}^X + p_{1,0}^X) \left(1 + (1-p_s) \left(p_{1,0}^X - p_{0,0}^X\right)\right)}, \label{eq:srp-distortion}\\ 
	\text{subject to}&\;\; \frac{1}{p_s} \left(1 - p_s \mathbb{P}\{\hat{X} = 0\}\right) \leq \bar{A},  \label{eq:srp-age}\\
  & c_1 p_1 + c_2 p_2 + c_3 p_3\leq \bar{C}, \label{eq:srp-cost} \\
  & \;\; p_0 + p_1 + p_2 + p_3 = 1, \label{eq:srp-sum-prob} 
 % & \; \; p_0 \geq 0, p_1 \geq 0, p_2 \geq 0, p_3 \geq 0,
\end{align}
\end{subequations}
where we recall that $p_s  = p_1 p_1^r\mathbb{P}\{h=1\} + p_3 \left( \mathbb{P}\{h=1\} + p_0^p \mathbb{P}\{h=0\} \right)$ is the probability that the destination receives the information successfully in a slot. We can observe that by 
setting $p_0 = 1 - p_1 - p_2 - p_3$ the equality constraint \eqref{eq:srp-sum-prob} can be rewritten as $p_1 + p_2 + p_3 \leq 1$. Furthermore, since SRP takes the decision independently of the ACK that the source receives, deciding to sample, process, and not transmit the information is not beneficial. Hence, we consider the optimal value for $p_2$ to be equal to $0$ and \eqref{eq:srp-opt-problem} can be transformed to the following: 
\begin{subequations}\label{eq:srp2-opt-problem}
\begin{align}
	\underset{0 \leq p_1,\leq p_3 \leq 1}{\text{minimize }} &\; \;\frac{ 2 (1-p_s) p_{1,0}^X p_{0,1}^X}{(p_{0,1}^X + p_{1,0}^X) \left(1 + (1-p_s) \left(p_{1,0}^X - p_{0,0}^X\right)\right)}, \\ 
	\text{subject to}&\;\; \frac{1}{p_s} \left(1 - p_s \mathbb{P}\{\hat{X} = 0\}\right) \leq \bar{A},\label{eq:ageC} \\
  & \;\;c_1 p_1 + c_3 p_3\leq \bar{C}, \;\; p_1 + p_3 \leq 1. \label{eq:srp2-sum-prob}
\end{align}
\end{subequations}
Since $p_s$ is an affine function of $p_1$ and $p_3$, the objective function of the above problem is a ratio of affine functions with a non-zero denominator. Furthermore, \eqref{eq:ageC} can be converted into an affine constraint, and \eqref{eq:srp2-sum-prob} represents an affine constraint. Thus, \eqref{eq:srp2-opt-problem} can be expressed as a linear fractional program, which can be further simplified to a linear program if a feasible solution exists. Hence, the above optimization problem can be solved using standard optimization techniques.

 \begin{figure*}[t]
    \centering
    \begin{subfigure}[t]{0.48\textwidth}
        \resizebox{\linewidth}{!}{\input{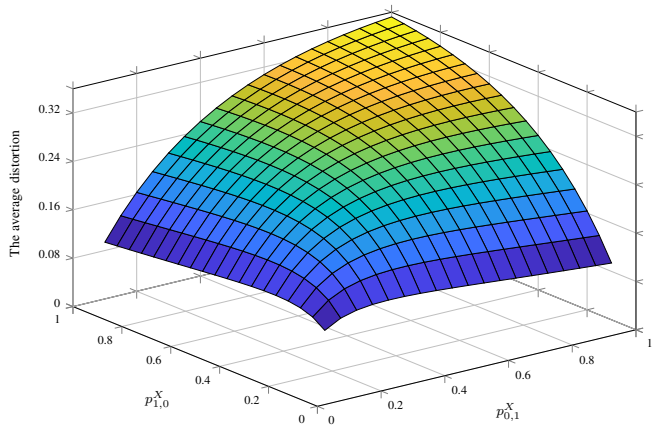}}
        \caption{The variation of the optimal average distortion with respect to $p_{0,1}^X$ and $p_{1,0}^X$, when $p_{0,1}^h = 0.2, p_{1,0}^h=0.3, p_1^r=0.9, p_0^p = 0.6, \bar{A}=3, c_1 = 1.2, c_2 = 0.8, c_3 = 1$ and  $\bar{C} = 0.8$.}
        \label{fig:Distortion-state}
    \end{subfigure}
    \hfill
    \begin{subfigure}[t]{0.48\textwidth}
        \resizebox{\linewidth}{!}{\input{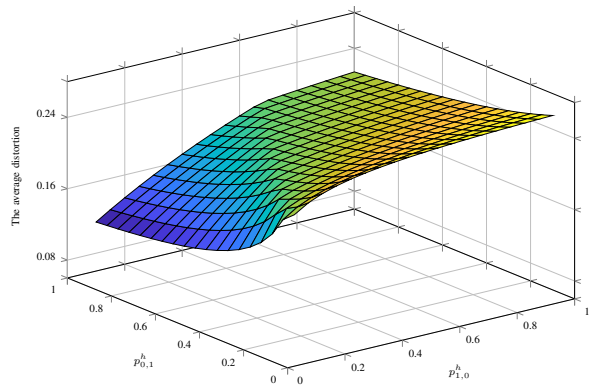}}
        \caption{The variation of the optimal average distortion with respect to $p_{0,1}^h$ and $p_{1,0}^h$, when $p_{0,1}^X = 0.2, p_{1,0}^X = 0.3, p_1^r=0.9, p_0^p = 0.6, \bar{A}=3, c_1 = 0.8, c_2 = 0.7, c_3 = 1.25$ and  $\bar{C} = 0.6$.}
        \label{fig:Distortion-channel}
    \end{subfigure}

    \caption{The variation of the optimal average distortion with respect to transition probabilities of source and channel states.}
    \label{fig:overall_X_h}
\end{figure*}

\section{Simulation Results}
In this section, we study the impact of the system parameters on the average distortion obtained by solving \eqref{eq:srp2-opt-problem}. 
%we illustrate the variation of the expected distortion under stationary randomized policy with respect to the various system parameters. 

Recall that the states of the process being monitored and the channel over which the samples are transmitted evolve as separate two-state Markov chains. We study the impact of their transition probabilities on the average distortion in Fig.~\ref{fig:overall_X_h}. In Fig. \ref{fig:Distortion-state}, we present the variation of the average distortion with the source state's transition probabilities, $p_{0,1}^X$ and $p_{1,0}^X$. When the values of $p_{0,1}^X$ and $p_{1,0}^X$ are close to $1$, the state of the process changes frequently from state $0$ to state $1$ and vice-versa, due to which the information gets outdated quickly. This results in a higher distortion when the latest received state is used to estimate the current state. Frequent transmissions are needed to reduce distortion, which may not be possible due to the average cost constraint, resulting in higher distortion. However, when $p_{0,1}^X$ or $p_{1,0}^X$ are close to $1$ or $0$, the states change less frequently, and, even without transmission, the estimated state at the destination and the actual state at the source can coincide, causing lower distortion. 
%When either $p_{0,1}^X$ or $p_{1,0}^X$ are decreased, the average distortion decreases. The rate of decrease is high when the state of the process changes very slowly.
Similarly, in Fig. \ref{fig:Distortion-channel}, we present the variation of the average distortion with the transition probabilities of the channel states, namely $p_{0,1}^h$ and $p_{1,0}^h$. When $p_{0,1}^h$ is high, and $p_{1,0}^h$ is low, the channel remains mainly in a good state; hence the probability of successfully receiving a transmitted update is higher, leading to lower distortion. On the other hand, when $p_{0,1}^h$ is low, and $p_{1,0}^h$ is high, the channel remains mainly in a bad state, and the probability of successful reception of a transmitted update is lower, causing higher distortion. When both $p_{0,1}^h$ and $p_{1,0}^h$ are high or low, the channel remains in good and bad states for a similar duration of time, resulting in intermediate distortions. 

These observations are further expanded in Fig. \ref{fig:Distortion-Age}, where we present the optimal average distortion achievable vs. normalized average age. We obtain the plot by considering a weighted sum of average distortion and the normalized average age as the objective function, which we minimize subject to an average cost constraint. We observe that when the state of the source changes rapidly, that is, when $p^X_{0,1}$ and $p^X_{1,0}$ are high, the distortion tends to be higher, which is further compounded if the probability of the channel being in a bad state is high, that is when $p^h_{0,1}$ is low and $p^h_{1,0}$ is high. Similarly, when the probability of the channel being in a good state is high, i.e., when $p^h_{0,1}$ is high and $p^h_{1,0}$ is low, the age tends to be lower, which decreases further if the source state changes slowly.  

Next, we study the impact of the costs of sampling and transmission and that of the bound on the average cost incurred in Fig.~\ref{fig:distortion_costs}. In Fig. \ref{fig:Distortion-cost}, we present the variation of the average distortion with respect to the overall cost for sampling and transmitting the raw information, $c_1$, and the overall cost for sampling, processing and transmitting the processed information, $c_3$. The selected parameters are such that the state of the source changes infrequently, necessitating only occasional delivery of status updates to keep the average distortion low. Moreover,  the channel remains in good and bad states for roughly equal amounts of time. Furthermore, if the raw sample is transmitted, it is successful with a high probability when the channel is in good state and unsuccessful when it is in bad state.
On the other hand, the processed sample is successful with probability $0.6$ when the channel is in the bad state and always successful when the channel is in the good state. From the figure, we observe that \eqref{eq:srp2-opt-problem} is infeasible for high values of $c_1$ and $c_3$ since transmitting frequently enough to meet the average AoI constraint may violate the average cost constraint, and transmitting infrequently enough to meet the average cost constraint may violate the average AoI constraint. However, when either $c_1$  or $c_2$ is lower, the distortion is relatively low, as one can either transmit the raw or processed samples while respecting the AoI and cost constraints. 
%The optimal average distortion remains low for any value of $c_3$ when $c_1$ is low. This is because, when $c_1$ is low, it is optimal to transmit the raw samples with  probability one and although it may not be successful when the channel is in bad state, it does not affect the distortion as not frequent updates were required as the state of the source is not changing very fast.  
In Fig. \ref{fig:Distortion-Cbar}, we present the variation of the optimal average distortion with respect to the bound on the average cost, $\bar{C}$. As $\bar{C}$ increases, the average distortion decreases and eventually reaches a plateau. This is because an increase in $\bar{C}$ allows for more frequent transmissions resulting in lower distortion, and beyond a specific value of $\bar{C}$, we can sample, process and transmit in every slot, achieving lowest possible distortion, that cannot be reduced further even if $\bar{C}$ is increased. It is important to note that the distortion does not reduce to zero even for large values of $\bar{C}$. This is because the channel can be in a bad state with a non-zero probability, and the success probability of transmissions over the bad state, even after processing, is strictly less than one. Hence, although we can transmit in every slot when $\bar{C}$ is large, not all the transmissions are successful, leading to a non-zero distortion.  
We also note that for any value of $\bar{C}$, the average distortion depends on whether the transition of the state of the stochastic process at the source changes slowly or rapidly; if the source transitions slowly, the average distortion tends to be small and vice versa. 

\begin{figure*}[t]
    \centering
        \begin{subfigure}[t]{0.48\textwidth}
        \centering
        \resizebox{\linewidth}{!}{ 
            \input{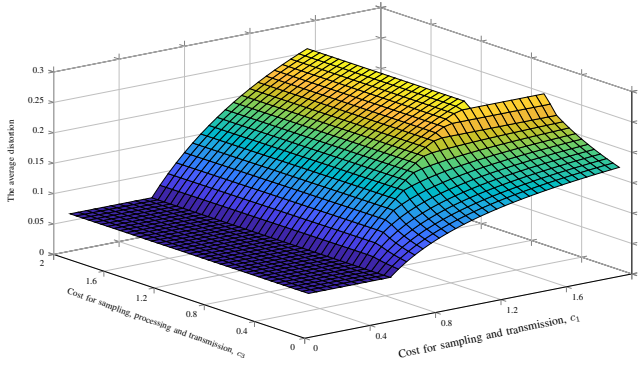}
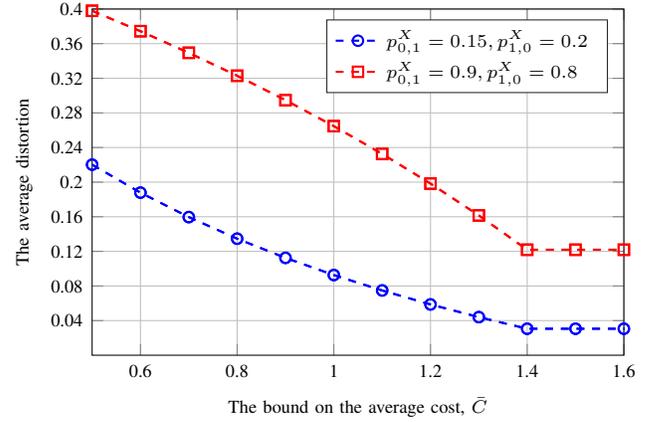
        }
        \caption{The variation of the optimal average distortion with respect to the cost incurred to sample and transmit the raw sample, $c_1$, and the cost incurred to sample, process and transmit the processed information, $c_3$,  when $p_{0,1}^X = 0.4, p_{1,0}^X=0.2, p_{0,1}^h = 0.2, p_{1,0}^h=0.3, p_1^r=0.9, p_0^p = 0.6, \bar{A}=6, c_2 = 0.8$ and  $\bar{C} = 0.8$.}
        \label{fig:Distortion-cost}
    \end{subfigure}
        \hfill
    \begin{subfigure}[t]{0.48\textwidth}
        \centering
        \resizebox{\linewidth}{!}{ 
            % % This file was created by matlab2tikz.
% %
% %The latest updates can be retrieved from
% %  http://www.mathworks.com/matlabcentral/fileexchange/22022-matlab2tikz-matlab2tikz
% %where you can also make suggestions and rate matlab2tikz.
% %
\begin{tikzpicture}[font=\scriptsize]

\begin{axis}[%
width=2.821in,
height=1.838in,
at={(2.08in,0.858in)},
scale only axis,
xmin=0.5,
xmax=1.6,
xtick={0.2, 0.4, 0.6, 0.8, 1, 1.2, 1.4, 1.6},
xticklabels={0.2, 0.4, 0.6, 0.8, 1, 1.2, 1.4, 1.6},
xlabel style={font=\scriptsize,
	yshift=0.5ex,
	name=label},
xmajorgrids,
xlabel={The bound on the average cost, $\bar{C}$},
ymin=0,
ymax=0.4,
ytick={0.04, 0.08, 0.12, 0.16, 0.2, 0.24, 0.28, 0.32, 0.36, 0.4},
yticklabels={0.04, 0.08, 0.12, 0.16, 0.2, 0.24, 0.28, 0.32, 0.36, 0.4},%{1, 2, 3, 4, 5, 6, 7, 8, 9, 10},
ylabel style={font=\scriptsize,
	yshift=-2ex,
	name=label},
ymajorgrids,
ylabel={The average distortion},
axis background/.style={fill=white},
title style={font=\bfseries},
title={},
%axis x line*=bottom,
%axis y line*=left,
legend columns=1, 
legend cell align={left}, % The command for legend alignment
legend style={legend cell align=left, align=left, legend pos = north east, draw=white!15!black}
]
\addplot[color=blue,dashed,thick, mark=o, mark options={solid, blue}, mark size = 2pt]
  table[row sep=crcr]{%
0.5	0.220183486238532\\
0.6	0.187866927592955\\
0.7	0.159596422838798\\
0.8	0.134656883901597\\
0.9	0.112492357856124\\
1	0.0926640926640927\\
1.1	0.0748211993398129\\
1.2	0.058679706601467\\
1.3	0.0440073345557593\\
1.4	0.0306122448979592\\
1.5	0.0306122448979592\\
1.6	0.0306122448979592\\
1.7	0.0306122448979592\\
1.8	0.0306122448979592\\
};
\addlegendentry{$p_{0,1}^X = 0.15, p_{1,0}^X=0.2$ }

\addplot[color=red, dashed,thick, mark=square, mark options={solid, red}, mark size = 2pt]
  table[row sep=crcr]{%
0.5	0.397947098302408\\
0.6	0.374390640233994\\
0.7	0.349426826207012\\
0.8	0.322925651198896\\
0.9	0.294740589125211\\
1	0.264705882352941\\
1.1	0.232633279483037\\
1.2	0.198308085776116\\
1.3	0.161484351106127\\
1.4	0.121878967414304\\
1.5	0.121878967414304\\
1.6	0.121878967414304\\
1.7	0.121878967414304\\
1.8	0.121878967414304\\
};
\addlegendentry{$p_{0,1}^X = 0.9, p_{1,0}^X=0.8$ }

\addplot [color=blue,dashed, thick, mark=o, mark options={solid, blue}, mark size = 2pt]
  table[row sep=crcr]{%
0.5	0.220183486238532\\
0.6	0.187866927592955\\
0.7	0.159596422838798\\
0.8	0.134656883901597\\
0.9	0.112492357856124\\
1	0.0926640926640927\\
1.1	0.0748211993398129\\
1.2	0.058679706601467\\
1.3	0.0440073345557593\\
1.4	0.0306122448979592\\
1.5	0.0306122448979592\\
1.6	0.0306122448979592\\
1.7	0.0306122448979592\\
1.8	0.0306122448979592\\
};

\addplot [color=red, dashed, thick, mark=square, mark options={solid, red}, mark size = 2pt]
  table[row sep=crcr]{%
0.5	0.397947098302408\\
0.6	0.374390640233994\\
0.7	0.349426826207012\\
0.8	0.322925651198896\\
0.9	0.294740589125211\\
1	0.264705882352941\\
1.1	0.232633279483037\\
1.2	0.198308085776116\\
1.3	0.161484351106127\\
1.4	0.121878967414304\\
1.5	0.121878967414304\\
1.6	0.121878967414304\\
1.7	0.121878967414304\\
1.8	0.121878967414304\\
};

\end{axis}
\end{tikzpicture}%
        }
        \caption{The variation of the optimal distortion with respect to the bound on the average cost $\bar{C}$, when parameters $p_{0,1}^h = 0.3, p_{1,0}^h=0.2, p_1^r=0.9, p_0^p = 0.6, \bar{A} = 3, c_1 = 1.2, c_2 = 0.8$ and $c_3 = 1.4$.}
        \label{fig:Distortion-Cbar}
    \end{subfigure}%
    \caption{The variation of the optimal average distortion with respect to the costs for different actions and the bound on the average cost.}
    \label{fig:distortion_costs}
\end{figure*}
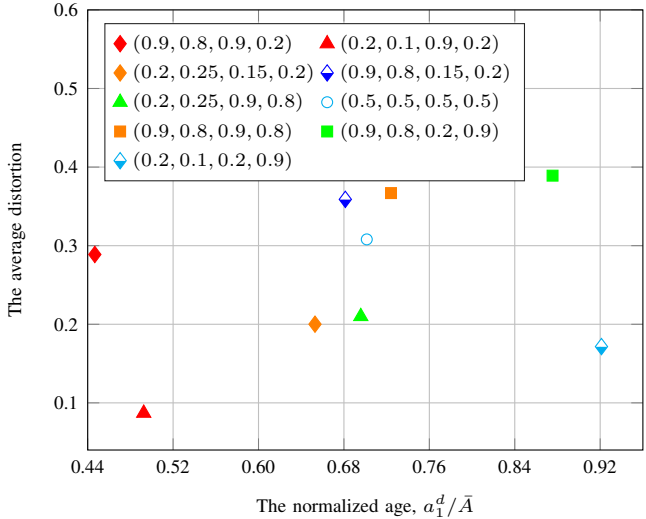
\begin{figure}[t]
    \centering
    \resizebox{0.48\textwidth}{!}{
		% % This file was created by matlab2tikz.
% %
% %The latest updates can be retrieved from
% %  http://www.mathworks.com/matlabcentral/fileexchange/22022-matlab2tikz-matlab2tikz
% %where you can also make suggestions and rate matlab2tikz.
% %
\begin{tikzpicture}[scale=0.9, font=\scriptsize]

\begin{axis}[%
width=2.821in,
height=2.238in,
at={(2.08in,0.858in)},
scale only axis,
xmin=0.44,
xmax=0.96,
xtick={0.44, 0.52, 0.60, 0.68, 0.76, 0.84, 0.92},
xticklabels={0.44, 0.52, 0.60, 0.68, 0.76, 0.84, 0.92},
xlabel style={font=\scriptsize,
	yshift=0.5ex,
	name=label},
%xlabel style={font=\scriptsize}, 0.44, 0.48, 0.52, 0.56, 0.60, 0.64, 0.68, 0.72, 0.76, 0.80, 0.84, 0.88, 0.92, 0.96
xmajorgrids,
xlabel={The normalized age, $a_1^d/\bar{A}$},
ymin=0.04,
ymax=0.6,
ytick={0, 0.1, 0.2, 0.3, 0.4, 0.5, 0.6, 0.7, 0.8, 0.9, 1},
yticklabels={0, 0.1, 0.2, 0.3, 0.4, 0.5, 0.6, 0.7, 0.8, 0.9, 1},%{1, 2, 3, 4, 5, 6, 7, 8, 9, 10},
ylabel style={font=\scriptsize,
	yshift=-2ex,
	name=label},
ymajorgrids,
ylabel={The average distortion},
axis background/.style={fill=white},
title style={font=\bfseries},
title={},
%axis x line*=bottom,
%axis y line*=left,
legend columns=2, 
legend cell align={left}, % The command for legend alignment
legend style={legend cell align=left, align=left, legend pos = north west, draw=white!15!black}
]

\addplot[only marks, mark=diamond*, mark options={solid, red}, mark size = 3pt]
  table[row sep=crcr]{%
0.4467 0.2888\\
};
\addlegendentry{$(0.9, 0.8, 0.9, 0.2)$}

\addplot[only marks, mark=triangle*, mark options={solid, red}, mark size = 3pt]
  table[row sep=crcr]{%
0.4925 0.0869\\
};
\addlegendentry{$(0.2, 0.1, 0.9, 0.2)$}

\addplot[only marks, mark=diamond*, mark options={solid, orange}, mark size = 3pt]
  table[row sep=crcr]{%
0.6529	0.2001\\
};
\addlegendentry{$(0.2, 0.25, 0.15, 0.2)$}

\addplot[only marks, mark=halfdiamond*, mark options={solid, blue}, mark size = 3pt]
  table[row sep=crcr]{%
0.6813 0.3589\\
};
\addlegendentry{$(0.9, 0.8, 0.15, 0.2)$}

\addplot[only marks, mark=triangle*, mark options={solid, green}, mark size = 3pt]
  table[row sep=crcr]{%
0.6958 0.2099\\
};
\addlegendentry{$(0.2, 0.25, 0.9, 0.8)$}

\addplot[only marks, mark=o, mark options={solid, cyan}, mark size = 2pt]
  table[row sep=crcr]{%
0.7014 0.3080\\
};
\addlegendentry{$(0.5, 0.5, 0.5, 0.5)$}

\addplot[only marks, mark=square*, mark options={solid, orange}, mark size = 2pt]
  table[row sep=crcr]{%
0.7241 0.3669\\
};
\addlegendentry{$(0.9, 0.8, 0.9, 0.8)$}

\addplot[only marks, mark=square*, mark options={solid, green}, mark size = 2pt]
  table[row sep=crcr]{%
0.8754 0.3891\\
};
\addlegendentry{$(0.9, 0.8, 0.2, 0.9)$}

\addplot[only marks, mark=halfdiamond*, mark options={solid, cyan}, mark size = 3pt]
  table[row sep=crcr]{%
0.9212 0.1716\\
};
\addlegendentry{$(0.2, 0.1, 0.2, 0.9)$}

\end{axis}
\end{tikzpicture}%}
    \caption{The optimal average distortion and the optimal normalized average age when $p_1^r=0.9, p_0^p=0.6, c_1=0.8, c_2=0.7, c_3=1.25, \bar{C} = 0.6$ and $\bar{A}=3$. The tuple values in the legend correspond to $(p_{0,1}^X, p_{1,0}^X, p_{0,1}^h, p_{1,0}^h)$ .}
    \label{fig:Distortion-Age}
\end{figure}
\balance
\section{Conclusion and Future Work}
In this work, we considered timely and accurate communication of samples from a source that monitors a stochastic process to a destination over a wireless ON/OFF channel. The samples had to be processed to determine the state of the stochastic process, which could be performed at the source before transmission or at the destination after reception. The goal was to minimize the distortion at the destination about the stochastic process at the source while maintaining the AoI at the destination below a threshold and satisfying a cost constraint at the source. We proposed a stationary randomized policy (SRP) as a solution and derived expected distortion, and provided a simple derivation of expected AoI using a lumpability argument on the two-dimensional DTMC formed with AoI and instantaneous distortion as states. We observed that the SRP does not account for ACK for making transmission decisions, so sampling, processing, and not transmitting is not beneficial. We conclude that the distortion tends to be higher when the source changes rapidly, and the channel is more likely to remain in a bad state. 
In addition, we have proposed and defined a constrained POMDP formulation for the problem, solving which numerically and comparing the performance with the SRP and other competitive policies  is considered part of future work.
%In the POMDP framework, the decision is taken sequentially, and hence the decision to sample, process, and not transmit the data can be an optimal decision compared to the SRP. 
%As a part of future work, we desire to obtain the solution to the formulated problem using the constrained POMDP framework and provide a comparison with the SRP. In the present work, we have considered the AoI as a constraint. 
%However, we intend to consider distortion and AoI as the objectives and formulate the multi-objective sequential decision-making problem and solve it.

\bibliographystyle{ieeetr}
\bibliography{references}
\end{document}